\newtheorem{theorem}{Theorem}[section]
\newtheorem{lemma}[theorem]{Lemma}
\newtheorem{corollary}[theorem]{Corollary}
\newtheorem{Definition}[theorem]{Definition}
\lstdefinelanguage{Pseudo}{
  keywords={let, and, or, not, while, loop, for, if, else, end, do, then, return},
  morecomment=[l]{//},
}
\newcounter{caseCounter}
\newcounter{subCaseCounter}
\newenvironment{distinction}[1][reset]{
  \newcommand{\case}[1]{
    \stepcounter{caseCounter}
    \item[{
      \hspace*{-2em}
      \normalfont\color{black}Case~\arabic{caseCounter}. \bf\boldmath ##1.
    }]
  }
  \ifstrequal{#1}{reset}{\setcounter{caseCounter}{0}}{}
  \begin{description}[leftmargin=\widthof{\hspace*{-2em}Case~1.\quad}]
}{
  \end{description}
}
\newenvironment{subdistinction}{
  \newcommand{\subcase}[1]{
  \item[\normalfont\unboldmath\color{black}
    Case~\arabic{caseCounter}.\arabic{subCaseCounter}\stepcounter{subCaseCounter}. ##1.]
  }
  \setcounter{subCaseCounter}{1}
  \begin{description}[leftmargin=0em]
}{
  \end{description}
}
\newcommand{\bs}{\backslash}
\newcommand{\jobs}{\mathcal J}
\newcommand{\machines}{\mathcal M}
\newcommand{\tree}{\mathcal T}
\newcommand{\OPT}{\mathrm{OPT}}
\newcommand{\Bgmin}[1]{B_{#1}^{\min }}
\newcommand{\Bg}[1]{B_{#1}}
\newcommand{\Sm}[2] {
  \ifthenelse{\equal{#2}{}}{
    S_{#1}(\blraw{}{})
  } {
    S_{#1}(\blraw{}{#2})
  }
}
\newcommand{\blocker}[1] {
  \mathcal B_{#1}
}
\newcommand{\blsymbol}[1]{%
\mathrm{%
\ifstrequal{#1}{s_}{SA}{%
\ifstrequal{#1}{bs_}{BA}{%
\ifstrequal{#1}{mm_}{BB}{%
\ifstrequal{#1}{m_}{BL}{%
\ifstrequal{#1}{}{}{{\errmessage{Unrecognized blocker: #1}}}}}}}%
}
}
\newcommand{\bltext}[1] {%
\ifstrequal{#1}{s_}{small-to-any}{%
\ifstrequal{#1}{bs_}{big-to-any}{%
\ifstrequal{#1}{mm_}{big-to-big}{%
\ifstrequal{#1}{m_}{big-to-least}{%
  \errmessage{Unrecognized blocker: #1}
}}}}%
}
\newcommand{\Bltext}[1] {%
\ifstrequal{#1}{s_}{Small-to-any}{%
\ifstrequal{#1}{bs_}{Big-to-any}{%
\ifstrequal{#1}{mm_}{Big-to-big}{%
\ifstrequal{#1}{m_}{Big-to-least}{%
  \errmessage{Unrecognized blocker: #1}
}}}}%
}
\newcommand{\bl}[2] {
  \mathcal M(\blraw{#1}{#2})
}
\newcommand{\blraw}[2] {
  \ifthenelse{\equal{#1}{}}{
  \ifthenelse{\equal{#2}{}}{
    \tree
  } {
    \tree^{(#2)}
  }
  }{
  \ifthenelse{\equal{#2}{}}{
    \tree_{\blsymbol{#1}}
  } {
    \tree_{\blsymbol{#1}}^{(#2)}
  }
  }
}
\newcommand{\act}[2] {
  \ifthenelse{\equal{#2}{}}{
    \mathcal A_{#1}(\blraw{}{})
  }{
    \mathcal A_{#1}(\blraw{}{#2})
  }
}
\newcommand{\jnew}{
  j_\mathrm{new}
}
\begin{document}

\title{On the Configuration-LP of the Restricted Assignment Problem%
\thanks{Research was supported by German Research Foundation (DFG) project JA 612/15-1}}

\author{
  Klaus Jansen\\
  \and Lars Rohwedder \\[.5em]
  \and
  Department of Computer Science, University of Kiel, 24118 Kiel, Germany \\
  \{kj, lro\}@informatik.uni-kiel.de
\date{}
}

\maketitle

\begin{abstract} We consider the classical problem
of {\sc Scheduling on Unrelated Machines}. In this problem a set of jobs 
is to be distributed among a set of machines
and the maximum load (makespan) is to be minimized.
The processing time $p_{ij}$ of a job $j$ depends on the machine $i$ it is assigned to.
Lenstra, Shmoys and Tardos gave a polynomial time $2$-approximation for this problem~\cite{Lenstra:1990:AAS:81018.81019}. 
In this paper we focus on a prominent special case, 
the {\sc Restricted Assignment} problem, in which $p_{ij}\in\{p_j,\infty\}$.
The configuration-LP is a linear programming relaxation for the
{\sc Restricted Assignment} problem.
It was shown by Svensson that the multiplicative gap
between integral and fractional solution, the integrality gap,
is at most $2 - 1/17 \approx 1.9412$~\cite{DBLP:journals/siamcomp/Svensson12}.
In this paper we significantly simplify his proof and achieve a bound of $2 - 1/6 \approx 1.8333$. 
As a direct consequence this provides a polynomial $(2 - 1/6 + \epsilon)$-estimation algorithm for
the {\sc Restricted Assignment} problem by approximating the configuration-LP.
The best lower bound known for the integrality gap is $1.5$ and no
estimation algorithm with a guarantee better than $1.5$ exists unless $\mathrm{P} = \mathrm{NP}$.

\paragraph{Keywords:} estimation algorithms, scheduling, local search, linear programming, integrality gap
\end{abstract}

\section{Introduction}%
In the problem {\sc Scheduling on Unrelated Machines} we are given a set of jobs $\jobs$ and 
a set of machines $\machines$.
Each job has to be assigned to exactly one machine.
A solution is a function $\sigma: \mathcal J\rightarrow \mathcal M$. 
Every Job $j\in\mathcal J$ has a processing time $p_{ij}$, that depends on the machine 
$i\in\mathcal M$ it is assigned to.
The objective is to minimize the highest load among all machines, that is
$\max_{i\in\mathcal M}\sum_{j\in\sigma^{-1}(i)} p_{ij}$.
This value is called the makespan of $\sigma$.
The best polynomial algorithm known for {\sc Scheduling on Unrelated Machine} has an approximation guarantee of
$2$~\cite{Lenstra:1990:AAS:81018.81019}. In the same paper it was also shown that
approximating the problem with a rate better than $1.5$ is $\mathrm{NP}$-hard.
Williamson and Shmoys~\cite{DBLP:books/daglib/0030297} as well as Schuurman and Woeginger~\cite{schuurman1999polynomial} asked as an important open question,
whether this can be improved. 
In both cases, the {\sc Restricted Assignment} problem is mentioned as well.
The {\sc Restricted Assignment} problem is a natural special case.
There we have $p_{ij}\in\{p_j,\infty\}$ for a processing time $p_j$ which depends only on the job.
The intuition is that each job is allowed only on a subset of machines
$\Gamma(j) = \{i\in\mathcal M : p_{ij} < \infty\}$, but the processing time on any of
these machines is identical.
The known bounds on approximability of the {\sc Restricted Assignment} match the general problem.

An important instrument for the {\sc Restricted Assignment} problem
is a linear programming formulation called the configuration-LP.
Svensson has shown that this relaxation has an integrality gap of no more than
$2 - 1/17$~\cite{DBLP:journals/siamcomp/Svensson12}.
Since the linear program can be solved up to an error of an arbitrarily small $\epsilon$ in polynomial time,
this gives us a polynomial $(2 - 1/17+\epsilon)$-estimation algorithm.
A $c$-estimation algorithm stands for an algorithm that computes a value $E$
with $\OPT \le E \le c \cdot \OPT$, where $\OPT$ denotes the optimum.
In contrast, a $c$-approximation algorithm would also need to produce a schedule that has
a makespan in this range. It is already $\mathrm{NP}$-hard to estimate the {\sc Restricted Assignment}
problem with a rate better than $1.5$ (as can be shown using the reduction from~\cite{Lenstra:1990:AAS:81018.81019}).

Further progress has been made on the special case of two different processing times, where an upper bound of $2 - 1/3$ was
shown for the integrality gap by Land, Maack, and the first author~\cite{DBLP:conf/swat/JansenLM16} and a $(2 - \delta)$-approximation for a very small $\delta$ by Chakrabarty, Khanna, and Li~\cite{DBLP:conf/soda/ChakrabartyKL15}.

The method used to show the bound on the integrality gap is a local search algorithm that
produces a solution of the mentioned quality. It is, however, not known, whether
the algorithm terminates in polynomial time.
An indication of the potential of this research is given by the closely related 
{\sc Restricted Max-Min Fair Allocation} problem.
This problem maximizes the minimum load instead of minimizing the maximum.
Similar techniques have been employed for this problem and the respective local search algorithm has been 
improved first to a quasi-polynomial running time by Pol{\'{a}}cek and Svensson~\cite{DBLP:journals/talg/PolacekS16}
and eventually to a polynomial one by Annamalai, Kalaitzis, and Svensson~\cite{DBLP:conf/soda/AnnamalaiKS15}.

Another noteworthy branch of research regarding the {\sc Restricted Assignment} problem is the special case
of {\sc Graph Balancing}. In this problem each job is allowed on at most two machines, that is for all $j\in\jobs$ we have $|\Gamma(j)| \le 2$. Using linear programming techniques a $1.75$-approximation was
achieved by Ebenlendr, Krc{\'{a}}l, and Sgall~\cite{DBLP:journals/algorithmica/EbenlendrKS14}.
Those techniques, however, differ very much from those related to the
configuration-LP. Recently Huang and Ott were able to show some further results on
instances of {\sc Restricted Assignment} where the constraint above applies only to big jobs~\cite{lightgb}.

\paragraph{Our contribution.}
As a direction for future research Svensson stated
in his paper~\cite{DBLP:journals/siamcomp/Svensson12}:
\begin{quote}
  {\it
    \noindent
  \lq\lq [To improve the upper bound on the integrality gap] one possibility would be to find a more elegant generalization of the techniques, presented [\dots] for two job sizes, to arbitrary processing times (instead of the exhaustive case distinction presented in this paper).\rq\rq} 
\end{quote}
This accurately captures where our contribution lies.
We improve the upper bound on the integrality gap by coping
with arbitrary many job sizes in a different way.
We also argue that our proof is simpler than the original one (see Section~\ref{differences}). 
\begin{theorem}
  The configuration-LP for the {\sc Restricted Assignment} problem has an integrality gap of at most $2 - 1/6$.
\end{theorem}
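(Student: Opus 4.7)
The plan is to adapt the local-search framework of Svensson~\cite{DBLP:journals/siamcomp/Svensson12}, replacing the exhaustive case analysis of that paper by a uniform accounting built around a small fixed list of \emph{blocker types}. After normalising the configuration-LP optimum to $T=1$ and fixing the target makespan $\tau = 2 - 1/6$, it suffices to show that every instance whose configuration-LP is feasible at value $1$ admits an integral schedule of makespan at most $\tau$.

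To do so I would start from an arbitrary schedule and, as long as some machine carries load above $\tau$, pick a distinguished job $\jnew$ on that machine and try to move it. Moving $\jnew$ to a candidate machine $i\in\Gamma(\jnew)$ may require first relocating some of the jobs currently assigned to $i$; those jobs become \emph{blockers}. Following the notation of the preamble I would distinguish four blocker types---small-to-any, big-to-any, big-to-big and big-to-least---according to whether the blocker is \emph{big} ($p_j>1/6$) or \emph{small} and to which subset of its allowed machines it can still be sent. The threshold $1/6$ is tuned by the analysis: a small job can be absorbed by a wider range of target machines, whereas big jobs need stricter conditions, and the trade-off between how many blockers of each type must be relocated balances precisely at the value $1/6$. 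The entire cascade of displacements is organised as a rooted tree $\tree$ with the initially overloaded machine at its root and edges recording the attempted moves.

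The heart of the argument is a dichotomy: at every stage, either $\tree$ can be extended by a new machine that unlocks some blocker, or the data already encoded in $\tree$ contradicts the assumption $\OPT_{\mathrm{LP}}\le 1$. To derive the contradiction I would build a feasible dual of the configuration-LP of value strictly larger than $1$ by assigning, on each machine of $\tree$, the $\tau$-load of non-blocker jobs together with a carefully weighted contribution from each blocker type reflecting how many target machines that type still has at its disposal; summing over the layers of $\tree$ then forces a strict excess. Repeating the extend/refute step grows $\tree$ until no further extension is possible, at which point the certificate becomes unconditional.

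The main obstacle, and the place where the constant $1/6$ is forced, will be the amortised bookkeeping needed to make this witness work uniformly for all four blocker types---in particular for big-to-least, which has the least flexibility and hence yields the binding constraint in the balancing. Once the dichotomy is established the local search must terminate, producing an integral schedule of makespan at most $\tau$ and hence the claimed bound on the integrality gap; polynomial running time of the local search itself is not asserted, as in the original proof, but the bound on the integrality gap transfers to a $(2-1/6+\epsilon)$-estimation algorithm by solving the configuration-LP to accuracy $\epsilon$.
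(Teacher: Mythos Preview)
Your high-level plan---Svensson-style local search, a blocker tree with the four types $\blsymbol{s_}$, $\blsymbol{bs_}$, $\blsymbol{mm_}$, $\blsymbol{m_}$, and a dual certificate when the tree is stuck---matches the paper's strategy. But two concrete points in your proposal would cause the argument to fail as written.

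First, and most importantly, you set the big/small threshold at $p_j>1/6$. The paper uses $p_j>1/2$, and this is not a free parameter to be ``tuned by the analysis'': the threshold $1/2$ is what guarantees that any configuration $C$ with $p(C)\le 1$ contains \emph{at most one} big job. Every case of the feasibility proof for machines outside $\machines(\blraw{s_}{}\cup\blraw{bs_}{})$ reduces to showing $z^*_{j}\le z^*(\act{i}{}\setminus C)$ for the \emph{single} big job $j\in C$ not already on $i$; with your threshold a configuration could contain several such ``big'' jobs and the reduction collapses. The constant $1/6$ in the paper enters elsewhere: it is the cap in $z^*_j=\min\{p_j,5/6\}$ and the $\pm 1/6$ offset applied to $y^*_i$ on $\blsymbol{s_}$- versus $\blsymbol{bs_}$-machines, together with the counting inequality $|\bl{bs_}{}|\le|\bl{s_}{}|$.

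Second, you locate the binding constraint at big-to-least. In the paper's analysis the $11/6$ bound is forced by the interaction of \bltext{s_} and \bltext{bs_} blockers (the offsets must cancel in the objective while each machine of either type still satisfies $y^*_i\ge 1$; see the concluding discussion). The big-to-least type is the paper's genuine technical novelty---it replaces Svensson's rounding of medium jobs by tracking only $\min B_i$ and letting it increase monotonically---but it is not where the constant is pinned. Your sketch does not yet contain this monotonicity idea, and without it the termination argument and the invariant maintenance for $\blsymbol{m_}$-blockers are missing.
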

\begin{corollary}
  For every $\epsilon > 0$ there exists a polynomial $(2 - 1/6 + \epsilon)$-estimation algorithm for the {\sc Restricted Assignment} problem.
\end{corollary}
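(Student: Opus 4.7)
The plan is to combine Theorem~1 with a polynomial-time procedure that approximates the optimum $\OPT_{\mathrm{LP}}$ of the configuration-LP within a multiplicative factor $1+\epsilon'$ for a suitably chosen $\epsilon'=\Theta(\epsilon)$; the reported estimate will then be $E:=(2-1/6)\,T^{*}$, where $T^{*}$ is the computed approximation. The whole argument boils down to two steps: approximately solving the LP, and then cashing in Theorem~1 on top of it.

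Recall the configuration-LP in feasibility form for a guessed makespan $T$: variables $x_{i,C}$ for each machine $i\in\machines$ and each configuration $C\subseteq\{j\in\jobs:i\in\Gamma(j)\}$ with $\sum_{j\in C}p_j\le T$, subject to $\sum_C x_{i,C}\le 1$ per machine and $\sum_{i,\,C\ni j}x_{i,C}\ge 1$ per job. Although the LP has exponentially many variables, its dual has only $|\jobs|+|\machines|$ of them, and a separation oracle for the dual decomposes into one knapsack problem per machine: find the heaviest valid configuration with respect to the dual job weights. Since knapsack admits an FPTAS, we obtain an $\epsilon'$-approximate separation oracle. Feeding it to the ellipsoid method yields a polynomial-time algorithm which, given $T$, either certifies that the LP is feasible at $T$ or produces a dual witness that it is infeasible at $T/(1+\epsilon')$. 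This is by-now standard Plotkin--Shmoys--Tardos-style machinery for configuration-LPs.

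Running this oracle inside a binary search on the geometric grid $\{\max_j p_j\cdot(1+\epsilon')^k\}$ restricted to $[\max_j p_j,\sum_j p_j]$ uses $O(\epsilon'^{-1}\log|\jobs|)$ calls and yields a value $T^{*}$ with
\[
\OPT_{\mathrm{LP}}\;\le\; T^{*}\;\le\;(1+\epsilon')^{2}\,\OPT_{\mathrm{LP}}.
\]
Setting $E:=(2-1/6)\,T^{*}$, the trivial relaxation bound $\OPT_{\mathrm{LP}}\le\OPT$ together with Theorem~1 gives
\[
\OPT\;\le\;(2-1/6)\,\OPT_{\mathrm{LP}}\;\le\;E\;\le\;(2-1/6)(1+\epsilon')^{2}\,\OPT_{\mathrm{LP}}\;\le\;(2-1/6)(1+\epsilon')^{2}\,\OPT.
\]
Choosing $\epsilon'$ small enough that $(2-1/6)(1+\epsilon')^{2}\le 2-1/6+\epsilon$, which only requires $\epsilon'=\Theta(\epsilon)$, closes the sandwich and finishes the argument.

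The main point requiring care is the interaction between the FPTAS-based separation and the ellipsoid method: one must verify that the composed procedure delivers the \emph{one-sided} guarantee used above (feasible at $T$ or infeasible at $T/(1+\epsilon')$), rather than a two-sided approximation that would break the inequality $\OPT\le E$. This is well known but is the only nonroutine bookkeeping in the reduction; everything else is immediate from Theorem~1 and the standard treatment of configuration-LPs.
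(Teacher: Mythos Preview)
Your argument is correct and follows exactly the route the paper takes: the Corollary is stated without a separate proof and is treated as an immediate consequence of the integrality-gap Theorem together with the cited fact that the configuration-LP can be approximated to a factor $1+\epsilon$ in polynomial time (the Bansal--Sviridenko result). You simply unpack that citation---FPTAS-based separation for the dual plus ellipsoid and binary search---and then sandwich $\OPT$ using the Theorem, which is precisely the intended derivation.
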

\subsection{Differences to Svensson's algorithm}
\label{differences}
We distinguish the jobs sizes only into small and big, whereas the original proof
additionally has a fine distinction of big jobs into medium, large, and huge.

One of the difficulties that arise in Svensson's paper 
when dealing with many job sizes is the 
role of the medium jobs.
There are situations where we want to move huge jobs to certain machines,
but these machines contain too many medium jobs.
In the process of removing those medium jobs,
it is hard to prevent bigger medium jobs from being moved to this machine.
The original paper solves this by rounding the medium jobs to the same size. 
To be precise, the rounding affects only the values in the construction of a dual LP solution, 
which is used in the analysis.
In contrast, we accept that during this process new medium
jobs are moved to the machine, as long as they are bigger than the smallest one on the machine.
Our observation is that when repeating this
process the size of the smallest medium job steadily increases; thus
the process eventually terminates.

Rounding of medium jobs introduces some unpleasant problems.
Our approach avoids these problems making the analysis a lot less
technical. It also appears to be a bottleneck for the solution quality
in the original local search algorithm and with our methods the approximation ratio can be improved
significantly.

\subsection{Notation}
  For jobs we usually use the variable $j$ or variants like $j_B, j'$, etc.
  For machines we use $i$ and the like.
  Since we frequently sum over sets of jobs, for a $H\subseteq \jobs$
  we write $\sum_{j\in H} p_j$ simply as $p(H)$. 
  With other variables indexed by jobs we do the same.

\subsection{The configuration-LP}
A configuration for a machine is a set of jobs that would not exceed the target makespan $T$, if
the jobs were assigned to that machine.
The set of all configurations for a machine $i$ is therefore defined as
\begin{equation*}
  \mathcal C_i(T) = \{C\subseteq \mathcal J : \sum_{j\in C} p_{ij} \le T\} \, .
\end{equation*}
Note that by this definition a job $j$ will only appear in configurations of machines in $\Gamma(j)$.
The idea for the configuration-LP (Figure~\ref{fig:clp}) is that for each machine a mixture of
configurations is selected. A variable $x_{i,C}$ describes which fraction of the configuration
$C$ is used on machine $i$. The first constraint guarantees that the sum of these fractions
on each machine is at most $1$. The second constraint ensures that each job appears in fractional
configurations with a total value of at least $1$.
It is easy to see that an integral solution to the configuration-LP directly correlates with
a solution for the respective instance of the {\sc Restricted Assignment} problem, when $T$ equals
the optimum.
The minimal $T$ for which the LP is feasible is considered the optimum of the relaxation and
is denoted by $\OPT^*$.
\begin{figure}
  \centering
  \begin{subfigure}{.47\textwidth}
  \centering
    \fbox{
      \addtolength{\linewidth}{-2\fboxsep}%
      \addtolength{\linewidth}{-2\fboxrule}%
      \begin{minipage}{\linewidth}
    \begin{align*}
      \sum_{C\in \mathcal C_i(T)} x_{i, C} &\le 1  &\forall i\in \machines \\
      \sum_{i\in \machines}\sum_{C\in \mathcal C_i(T) : j\in C} x_{i, C} &\ge 1 &\forall j\in \mathcal J \\
      x_{i,C} &\ge 0 \\
    \end{align*}
    \end{minipage}}
    \caption{Primal}
    \label{fig:primal}
  \end{subfigure}
  ~
  \begin{subfigure}{.47\textwidth}
    \fbox{
      \addtolength{\linewidth}{-2\fboxsep}%
      \addtolength{\linewidth}{-2\fboxrule}%
      \begin{minipage}{\linewidth}
    \begin{align*}
      \min &\sum_{i\in \machines} y_i - \sum_{j\in \jobs} z_j \\
      s.t.& \\
      y_i &\ge \sum_{j\in C} z_j \quad \forall i\in \machines, C\in \mathcal C_i(T) \\
      y_i, z_j &\ge 0
    \end{align*}
    \end{minipage}}
    \caption{Dual}
    \label{fig:dual}
  \end{subfigure}
  \caption{The configuration-LP}
  \label{fig:clp}
\end{figure}
Although the configuration-LP has an exponential number of variables,  
it can be approximated in polynomial time to a factor of $1 + \epsilon$
for any $\epsilon > 0$ ~\cite{DBLP:conf/stoc/BansalS06}.
A well-known result from linear programming theory is that an unbounded dual implies
an infeasible primal. This will be a central argument in our analysis.

We note that in the more general {\sc Scheduling on Unrelated Machine} problem the configuration-LP has an integrality gap that cannot be bounded by a constant 
lower than 2~\cite{DBLP:journals/scheduling/VerschaeW14}, which indicates that these techniques
are probably not particularly helpful there.
In the restricted variant, however, no instance is known to have an integrality gap higher
than $1.5$. An instance with integrality gap $1.5$ is given in~\cite{DBLP:conf/swat/JansenLM16}.

\subsection{Reduced problem} %
Let us first simplify our problem.
The makespan we aim for is $(1 + R) \OPT^*$, where $R = 5/6$. 
For convenience we scale each processing time by $1/\OPT^*$;
thereby we establish $\OPT^* = 1$.
Unless otherwise stated, statements about configurations or the
configuration-LP are meant in respect to a makespan of $1$.

\begin{Definition}[Small and big jobs]{\rm%
A job $j$ is \emph{small} if $p_j \le 1/2$ and \emph{big} otherwise.
The set of small (big) jobs is denoted by $\jobs_S$
(respectively, $\jobs_B$).}
\end{Definition}
A crucial difference between small and big jobs is that there can be more than one small job in a configuration,
whereas at most one big job may occur. 
\begin{Definition}[{[Valid]} partial schedule]{\rm%
We call a function $\sigma: \jobs\rightarrow \machines\cup\{\bot\}$ a partial schedule
when for each job $j$ either $\sigma(j) = \bot$ or $\sigma(j)\in\Gamma(j)$.
The value $\bot$ is used to indicate that a job has not been assigned, yet.
A partial schedule is said to be \emph{valid} if
for each machine $i\in\machines$ the sum of the processing times of the jobs assigned to it
is at most $1 + R$, that is $p(\sigma^{-1}(i)) \le 1 + R$.}
\end{Definition}
To find a schedule for all jobs, we maintain a valid partial schedule
and extend it one job at a time (in an arbitrary order).
The problem that we will focus on in this paper is given below.
\\[1em]
\fbox{
  \begin{minipage}{.95\linewidth}
  \begin{description}
    \item[\normalfont\color{black}\textit{Input:}] 
      An instance of {\sc Restricted Assignment} with $\OPT^* = 1$,
      a valid partial schedule $\sigma$, and
      a job $\jnew$ with $\sigma(\jnew) = \bot$.
    \item[\normalfont\color{black}\textit{Output:}]
      A valid partial schedule $\sigma'$ with $\sigma'(\jnew) \neq \bot$
      and if $\sigma(j) \neq \bot$ then
      $\sigma'(j) \neq \bot$ for all $j\in\jobs$.
  \end{description}
  \end{minipage}%
}
\\[1em]
Without loss of generality assume that the job identifiers are natural numbers, i.e., $\jobs = \{1, 2, 3, \dotsc\}$,
and the jobs are ordered by size ($p_1 \le p_2 \le \dotsc$).
This provides a total order on the jobs and simplifies the notation in algorithm and analysis.

\section{Local search algorithm}
The key idea for the algorithm is a structure of so-called blockers. 
We motivate it with the thought below. 
Suppose adding $\jnew$ exceeds the capacity of every machine in $\Gamma(\jnew)$.
This means we have to move away some jobs from those machines first. Indeed,
these jobs now have a similar role to $\jnew$.
In other words, we try to find a new machine for them as well.
We can repeat this with the hope that at some point we can move one of the jobs.
To keep track of all dependencies between machines and jobs we use blockers,
which represent moves that are placed on hold to be performed at a later time.

\begin{Definition}[{[Valid]} moves]{\rm%
  A move $(j,\, i)$ (with respect to a partial schedule $\sigma$)
  is composed of a job $j$ and a machine $i$ requiring 
  that $j$ is allowed on $i$, but is assigned to a different machine,
  that is $i\in\Gamma(j)\backslash\{\sigma(j)\}$.
  We call a move $(j, i)$ \emph{valid} if $p(\sigma^{-1}(i)) + p_j \le 1 + R$.}
\end{Definition}
Note that for a valid move $(j, i)$ altering the assignment of $j$ to $i$ does not compromise
the validity of the partial schedule.
Here and subsequently, $\Bg{i}$ denotes the set of big jobs on machine $i$, i.e., $\sigma^{-1}(i) \cap \jobs_B$.
We define
\begin{equation*}
  \Bgmin{i} := \begin{cases}
    \{\min B_i\} &\text{ if } \Bg{i}\neq\emptyset \text{ and} \\
    \emptyset &\text{ otherwise.}
  \end{cases}
\end{equation*}
Recall that jobs are represented by natural numbers and are ordered by size.
As opposed to the rather natural set $\{j\in B_i \,\vert\, \forall j'\in B_i (p_j\le p_{j'})\}$,
$\Bgmin{i}$ always has exactly one or zero elements. 
Also, $\Bgmin{i}$ is always defined, whereas $\min B_i$ is undefined for $B_i = \emptyset$.
Note that $S_i$, which we will introduce later as well, is not analogous to $\Bg{i}$.

\begin{Definition}[Blockers]{\rm%
A blocker is a tuple $(j, i, \Theta)$ where $(j, i)$ is a move and $\Theta$ is the type of
the blocker.
There are four types with the following abbreviations:\\
\begin{description*}
  \item[$(\blsymbol{s_})$] \emph{\bltext{s_} blockers},
  \item[$(\blsymbol{bs_})$] \emph{\bltext{bs_} blockers},
  \item[$(\blsymbol{mm_})$] \emph{\bltext{mm_} blockers},
  \item[$(\blsymbol{m_})$] \emph{\bltext{m_} blockers}.
\end{description*}}
\end{Definition}
As the first part of the name suggests, each \bltext{s_} blocker $(j, i, \blsymbol{s_})$
corresponds to a small job $j$; all other types correspond to big jobs.
The latter part of the type's name is a hint at which jobs are marked undesirable
on the machine corresponding to the blocker.
The algorithm will try to move
jobs that are undesirable on a machine away from it;
at the same time it will not attempt to move such jobs onto the machine.
On machines of small-/big-to-any blockers all jobs are undesirable;
on machines of \bltext{mm_} blockers big jobs are undesirable;
finally on machines of \bltext{m_} blockers only particular big jobs are undesirable.
The goal for this type is to steadily increase the size of the smallest big job. 
To achieve this we would like to remove the smallest job from $B_i$ and at the same time ensure that no smaller jobs
are being added to $B_i$ (\bltext{m_} blockers will ensure that $B_i\neq\emptyset$).
More precisely, all big jobs of index smaller or equal to $\min B_i$ are marked undesirable.

The blocker tree $\tree$ is a set of blockers.
We will elaborate the tree analogy in a moment.
Roughly speaking, the blockers correspond to moves that the algorithm considers
useful for assigning $\jnew$.
For simplicity of notation, we say a move $(j, i)$ is in $\tree$, if there is a blocker $(j, i, \Theta)$ for
some type $\Theta$ with $(j, i, \Theta)\in\tree$.
The blockers corresponding to the specific types are written as
$\blraw{s_}{}$, $\blraw{bs_}{}$, $\blraw{mm_}{}$, and $\blraw{m_}{}$.

With $\tree$ we associate a tree structure, in which the blockers and
one additional root form the nodes.
Each blocker $\blocker{} = (j, i, \Theta)$ has a parent, that is determined by $j$. It is
a blocker for machine $\sigma(j)$ with a type, for which $j$ is regarded undesirable, or
the root if $j = \jnew$.
If there are several candidates for a parent, the algorithm uses the one that was added to the blocker
tree first. We say this blocker activates the job (see also definition of active jobs below).
It is important to understand that the tree is always connected, i.e., a parent exists for every blocker.
This, however, becomes obvious once we complete the description the algorithm.

We denote the individual blockers by $\blocker{1},\dotsc,\blocker{\ell}$
in the order they were added.
For $k \le \ell$ we write $\blraw{}{\le k}$ for the blockers $\blocker{1},\dotsc,\blocker{k}$.
In the final algorithm whenever we remove a blocker $\blocker{k}$, we also remove
all blockers added to the tree after it, that is $\blocker{k+1},\dotsc,\blocker{\ell}$.

From the blocker tree, we derive the machine set $\bl{}{}$ which
consists of all machines corresponding to blockers in $\blraw{}{}$.
The same notation is used for subsets of $\tree$, such as for instance
in $\bl{}{\le k}$ and $\bl{bs_}{}$.

The same machine can appear more than once in the blocker tree.
In that case, the undesirable jobs are the union of the undesirable jobs from all types.
Also, the same job can appear multiple times in different blockers.

\begin{Definition}[Active jobs]{\rm%
  We consider the small jobs $j$ that are undesirable on all other machines 
  they are allowed on, that is
  \begin{equation*}
    \Gamma(j)\backslash\{\sigma(j)\} \subseteq \machines(\blraw{s_}{} \cup \blraw{bs_}{}) .
  \end{equation*}
  We write $\Sm{}{}$ for these jobs.
  They are small jobs that we cannot hope to move at the moment.
  By $\Sm{i}{}$ we denote for a machines $i$ the set of
  jobs in $\Sm{}{}$ that are assigned to $i$, that is $\Sm{i}{} := \Sm{}{}\cap\sigma^{-1}(i)$.
  The set of \emph{active} jobs $\act{}{}$ includes $\jnew$, $\Sm{}{}$ as well as
  all those jobs, that are undesirable on the machine they are currently assigned to.
  Like with $\Sm{i}{}$, we denote by $\act{i}{}$ the set $\act{}{}\cap\sigma^{-1}(i)$.}
\end{Definition}

\subsection{Detailed description of the algorithm}
The general outline of the algorithm is that
as long as $\jnew$ has not been assigned, 
the algorithm (see Algorithm~\ref{algo:simple-extend}) performs a valid move
in the blocker tree, if possible, and otherwise adds a new blocker.
\begin{figure}
  \centering
  \begin{minipage}{.67\textwidth}
\begin{lstlisting}[
  caption=Local search,
  language=Pseudo,
  frame=tb,
  label=algo:simple-extend,
  captionpos=t,
  abovecaptionskip=-\medskipamount,
  keywordstyle=\bfseries,
  mathescape=true,
%  commentstyle=\color{gray!50!black}\textit,
%  escapeinside={(*}{*)},
  basicstyle=\small
]
// Input: Job $\jnew$ and partial schedule $\sigma$
initialize empty blocker tree $\tree$;
loop
  if a move $(j,\, i)$ in $\tree$ is valid then
    Let $\blocker{k}$ be the blocker that activated $j$;
    // Update the schedule
    $\sigma(j) \gets i$;
    if $j = \jnew$ then
      return $\sigma$;
    end
    // Delete $\blocker{k}, \blocker{k+1},\dotsc$
    $\tree\gets\blraw{}{\le k - 1}$;
  else
    choose a potential move $(j, i)$ with minimum value;
    add $(j, i)$ to $\tree$ with the correct type;
  end
end
\end{lstlisting}
\end{minipage}
\end{figure}

\paragraph{Adding blockers.}
The move, that is added to $\tree$, has to meet certain requirements. 
A move, that does, is called a potential move.
For each type of blocker we define a type of potential move:
Potential \bltext{bs_} moves, potential \bltext{mm_} moves, etc.
When a potential move is added to the blocker tree,
its type will then be used for the blocker.
Let $j\in\act{}{}$.
For a move $(j, i)$ to be a potential move of a certain type,
it has to meet the following requirements.
\begin{enumerate}
  \item $(j, i)$ is not already in $\tree$.
  \item The size of $j$ corresponds to the type, i.e.,
        if $j$ is small, $(j, i)$ must be a \bltext{s_} move and
        if $j$ is big, $(j, i)$ must be a \bltext{bs_}, \bltext{mm_}, or \bltext{m_} move.
  \item $j$ is not undesirable on $i$.
        In other words,
        $i\notin \machines(\blraw{s_}{}\cup\blraw{bs_}{})$ and
        if $j$ is big, then $i\notin\bl{mm_}{}$
           and either $i\notin\bl{m_}{}$ or $\min B_i < j$.
   \item The load on $i$ meets certain conditions depending on the type.
     For the sake of readability, these conditions are arranged in table form
     (see Table~\ref{ta:simple-blockers}).\label{en:conditions}
\end{enumerate}
\begin{table*}
  \centering
  \caption{Types of potential moves / blockers w.r.t. a move $(j,i)$}
  \label{ta:simple-blockers}
  \begin{tabular}{lrll}
    \toprule
    Type & Conditions & $\mathrm{val}(j, i)$ & Undesirable \\\midrule
    Small-to-any ($\blsymbol{s_}$) & None & $(1, |\sigma^{-1}(i)|)$ & All jobs \\
    Big-to-any ($\blsymbol{bs_}$) & $p(\Sm{i}{} \cup \Bg{i}) + p_j \le 1 + R$ & $(2, |\sigma^{-1}(i)|)$ & All jobs \\
    Big-to-least ($\blsymbol{m_}$) & $p(\Sm{i}{} \cup \Bgmin{i}) + p_j > 1 + R$ & $(3, - \min B_i)$ & Big jobs $j_B$ \\
                                 & $p(\Sm{i}{}) + p_j \le 1 + R$ & & with $j_B \le \min B_i$ \\
    Big-to-big ($\blsymbol{mm_}$) & $p(\Sm{i}{} \cup B_i) + p_j > 1 + R$ & $(4, |B_i|)$ & Big jobs \\
                                & $p(\Sm{i}{} \cup \Bgmin{i}) + p_j \le 1 + R$ & & \\\bottomrule
  \end{tabular}
\end{table*}
If we compare the conditions in the table, we notice that
for moves of big jobs, there is exactly one type that applies,
unless $p(\Sm{i}{}) + p_j > 1 + R$, in which case the move is
never a potential move.
The table also lists a value for each type of move.
The algorithm will add the move that has the lowest value (in lexicographical order).
It may be useful at this point to take a closer look at the
value of a \bltext{m_} move, that is $(3, -\min \Bg{i})$.
Here $\min \Bg{i}$ is a certain job. The definition only makes sense, because
we use the convention that jobs are identified by natural number $\{1,\dotsc,|\jobs|\}$.

\paragraph{Performing valid moves.}
If a move $(j, i)$ in $\tree$ is valid, it can be performed.
Let $\blocker{k}$ be the blocker that activated $j$.
The algorithm will simply reassign $j$ to $i$ and discard $\blocker{k},\blocker{k+1},\dotsc$.
The idea is that during the next iteration of the loop,
$\blocker{k}$ could be added back to $\tree$, potentially with a different type. 

\subsection{Example}%
\begin{figure}[b!]
\centering
\begin{subfigure}{\linewidth}
  \centering
\begin{tikzpicture}[scale=0.7]
  \draw[pattern=north west lines, pattern color=gray] (0, 0) rectangle (1, 2) node[fill=white, pos=.5] {$\small\jnew$};
  \draw[] (0, 0) rectangle (1, 2);
  \draw[-triangle 60] (1, 1) -- (2.5, 1) node[above, pos=.5] {$\blsymbol{m_}$};
  \draw[dashed] (2.5, 0) rectangle (3.5, 3);
  \draw[pattern=north west lines, pattern color=gray] (2.5, 0) rectangle (3.5, 2);
  \draw (2.5, 2) rectangle (3.5, 2.2);
  \draw (2.5, 2.2) rectangle (3.5, 2.3);
  \draw (2.5, 2.3) rectangle (3.5, 2.6);
  \node at (3, -0.5) {$\blocker 1 (M_1)$};

  \draw[-triangle 60] (3.5, 1) -- (5, 1) node[above, pos=.5] {$\blsymbol{bs_}$};
  \draw[dashed] (5, 0) rectangle (6, 3);
  \draw[pattern=north west lines, pattern color=gray] (5, 0) rectangle (6, 0.8);
  \draw[pattern=north west lines, pattern color=gray] (5, 0.8) rectangle (6, 1.3);
  \draw[pattern=north west lines, pattern color=gray] (5, 1.3) rectangle (6, 1.4);
  \draw[fill=gray] (5, 1.4) rectangle (6, 1.8);
  \node at (5.5, -0.5) {$\blocker 2 (M_2)$};

  \draw[-triangle 60] (6, 1) -- (7.5, 1) node[above, pos=.5] {$\blsymbol{s_}$};
  \draw[dashed] (7.5, 0) rectangle (8.5, 3);
  \draw[pattern=north west lines, pattern color=gray] (7.5, 0) rectangle (8.5, .5);
  \draw[pattern=north west lines, pattern color=gray] (7.5, .5) rectangle (8.5, .7);
  \draw[pattern=north west lines, pattern color=gray] (7.5, .7) rectangle (8.5, 1.6);
  \draw[pattern=north west lines, pattern color=gray] (7.5, 1.6) rectangle (8.5, 2);
  \draw[pattern=north west lines, pattern color=gray] (7.5, 2) rectangle (8.5, 3);
  \node at (8, -0.5) {$\blocker 3 (M_3)$};

  \draw[-triangle 60] (6, 1) .. controls (8, 5) .. (10, 1);
  \node at (9.25, 3.75) {$\blsymbol{s_}$};
  \draw[dashed] (10, 0) rectangle (11, 3);
  \draw (10, 0) rectangle (11, 2);
  \draw[pattern=north west lines, pattern color=gray] (10, 2) rectangle (11, 2.2);
  \draw[pattern=north west lines, pattern color=gray] (10, 2.2) rectangle (11, 2.3);
  \draw[pattern=north west lines, pattern color=gray] (10, 2.3) rectangle (11, 2.6);
  \node at (10.5, -0.5) {$\blocker 4 (M_1)$};
\end{tikzpicture}
\caption{The blocker tree before the move,}
\end{subfigure}
\\[1em]
\begin{subfigure}{.2\linewidth}
\centering
\begin{tikzpicture}[scale=0.7]
  \draw[pattern=north west lines, pattern color=gray] (0, 0) rectangle (1, 2) node[fill=white, pos=.5] {$\jnew$};
  \draw[] (0, 0) rectangle (1, 2);
  \draw[-triangle 60] (1, 1) -- (2.5, 1) node[above, pos=.5] {$\blsymbol{m_}$};
  \draw[dashed] (2.5, 0) rectangle (3.5, 3);
  \draw[pattern=north west lines, pattern color=gray] (2.5, 0) rectangle (3.5, 2);
  \draw (2.5, 2) rectangle (3.5, 2.2);
  \draw (2.5, 2.2) rectangle (3.5, 2.3);
  \draw (2.5, 2.3) rectangle (3.5, 2.6);
  \draw[fill=gray] (2.5, 2.6) rectangle (3.5, 3);
  \node at (3, -0.5) {$\blocker 1 (M_1)$};

\end{tikzpicture}
\caption{after the move, and}
\end{subfigure}
~
\begin{subfigure}{.3\linewidth}
  \centering
\begin{tikzpicture}[scale=0.7]
  \draw[pattern=north west lines, pattern color=gray] (0, 0) rectangle (1, 2) node[fill=white, pos=.5] {$\jnew$};
  \draw[] (0, 0) rectangle (1, 2);
  \draw[-triangle 60] (1, 1) -- (2.5, 1) node[above, pos=.5] {$\blsymbol{m_}$};
  \draw[dashed] (2.5, 0) rectangle (3.5, 3);
  \draw[pattern=north west lines, pattern color=gray] (2.5, 0) rectangle (3.5, 2);
  \draw (2.5, 2) rectangle (3.5, 2.2);
  \draw (2.5, 2.2) rectangle (3.5, 2.3);
  \draw (2.5, 2.3) rectangle (3.5, 2.6);
  \draw[fill=gray] (2.5, 2.6) rectangle (3.5, 3);
  \node at (3, -0.5) {$\blocker 1 (M_1)$};

  \draw[-triangle 60] (3.5, 1) -- (5, 1) node[above, pos=.5] {$\blsymbol{bs_}$};
  \draw[dashed] (5, 0) rectangle (6, 3);
  \draw[pattern=north west lines, pattern color=gray] (5, 0) rectangle (6, 0.8);
  \draw[pattern=north west lines, pattern color=gray] (5, 0.8) rectangle (6, 1.3);
  \draw[pattern=north west lines, pattern color=gray] (5, 1.3) rectangle (6, 1.4);
  \node at (5.5, -0.5) {$\blocker 2 (M_2)$};

\end{tikzpicture}
\caption{after the blocker is added back.}
\end{subfigure}
\end{figure}%
This example demonstrates the process of a move being performed.
In the illustrations the dashed boxes show the capacity of a machine,
hatched jobs are those, which the blocker activates, and the gray job is the one
that will be moved.
The initial state is given in~(a). In the current blocker tree 3 different machines are involved.
The machine corresponding to $\blocker 1$ is the same as the one for $\blocker 4$.
It was first added as a \bltext{m_} blocker and then later as a \bltext{s_} blocker.
The move for $\blocker 4$ is valid. The job was activated by $\blocker 2$.
After the job is moved (see~(b)), 
the blockers starting with the one that activated the gray job
are removed from the blocker tree.
Since the move for $\blocker 2$ is still a potential move,
it can be added back to the blocker tree in the next iteration (see~(c)).
We notice that the value of the move, that is $(2, |\sigma^{-1}(M_2)|)$, has decreased
between (a) and (c).

\section{Analysis}
The correctness of the algorithm relies on two theorems.
First, we need to understand that at any point the blocker tree either contains a valid move,
which can be performed, or there is a potential move that can be added to the blocker tree.
This means that the algorithm does not get stuck at any point during its execution.
Next, we show that after a bounded number of steps the algorithm terminates. 

\begin{lemma}[Invariants]
  \label{lm:inv_}
  At the start of each iteration of the loop,
  \begin{enumerate}
    \item for all $(j, i, \blsymbol{bs_}) = \blocker{k+1}\in\blraw{bs_}{}$:
      $p(\Sm{i}{\le k} \cup B_i) + p_j \le 1 + R$; \label{inv:bs_}
    \item for all $(j, i, \blsymbol{m_}) = \blocker{k+1}\in\blraw{m_}{}$:
      $p(\Sm{i}{\le k} \cup \Bgmin{i}) + p_j > 1 + R$ and \\
      $p(\Sm{i}{\le k}) + p_j \le 1 + R$; \label{inv:m_}
    \item for all $(j, i, \blsymbol{mm_}) = \blocker{k+1}\in\blraw{mm_}{}$:
      $p(\Sm{i}{\le k} \cup B_i) + p_j > 1 + R$ and \\
      $p(\Sm{i}{\le k} \cup \Bgmin{i}) + p_j \le 1 + R$. \label{inv:mm_}
  \end{enumerate}
\end{lemma}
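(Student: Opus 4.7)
The plan is induction on the iteration count. Initially $\tree$ is empty and all three invariants hold vacuously. For the inductive step I consider the two possible updates. \textbf{Case A (a new blocker $\blocker{k+1}$ is added):} the schedule $\sigma$ is unchanged and none of the prefixes $\blraw{}{\le k'}$ for $k'\le k$ is altered, so the invariants for $\blocker{1},\dotsc,\blocker{k}$ survive automatically; for $\blocker{k+1}$ itself the invariant is literally the corresponding row of Table~\ref{ta:simple-blockers} with $\blraw{}{}$ replaced by $\blraw{}{\le k}$, which holds by the potential-move requirements the algorithm uses to insert it. \textbf{Case B (a valid move $(j'',i'')$ is performed):} let $\blocker{k'}$ activate $j''$; blockers with index $\ge k'$ are discarded, so it suffices to verify the invariants for each remaining $\blocker{k+1}=(j,i,\Theta)$ with $k+1<k'$. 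For these, $\blraw{}{\le k}$ is untouched, and only the reassignment of $j''$ from $i'''=\sigma(j'')$ to $i''$ can possibly affect $\Sm{i}{\le k}$, $\Bg{i}$, or $\Bgmin{i}$.

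The key bookkeeping observation is that the blocker $\blocker{k_0}=(j'',i'',\Theta'')$ carrying the move was inserted strictly after $\blocker{k'}$: at the moment $\blocker{k_0}$ was added, $j''$ was already active and therefore already had an activator in the tree, hence $k'<k_0$. In particular every remaining $\blocker{k+1}$ was present when $\blocker{k_0}$ entered, so the potential-move check ``$j''$ is not undesirable on $i''$'' was performed against $\blocker{k+1}$; and because $\blocker{k'}$ is the first blocker in $\tree$ marking $j''$ undesirable on $i'''$, no blocker in $\blraw{}{\le k}\subseteq\blraw{}{\le k'-1}$ marks $j''$ undesirable on $i'''$ at all. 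For a small $j''$, which is only made undesirable by \bltext{s_} and \bltext{bs_} blockers, this specializes to the statement $i'''\notin\bl{s_}{\le k}\cup\bl{bs_}{\le k}$.

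I then case-split on $\Theta$. For $\Theta=\blsymbol{bs_}$ every job is undesirable on $i$, forcing both $i''\neq i$ and $i'''\neq i$, so $\sigma^{-1}(i)$ is untouched and the invariant trivially persists. For $\Theta\in\{\blsymbol{m_},\blsymbol{mm_}\}$ I will establish three facts that together yield the invariant: (1)~$\min\Bg{i}$ is frozen while $\blocker{k+1}$ remains in $\tree$, since adding a smaller big job would violate the potential-move condition $\min\Bg{i}<j$ attached to BL, and any move removing $\min\Bg{i}$ would be activated by $\blocker{k+1}$ or an earlier blocker on $i$ and hence would also remove $\blocker{k+1}$; (2)~$\Sm{i}{\le k}$ cannot lose any job, because every member has all its other allowed machines in $\bl{s_}{\le k}\cup\bl{bs_}{\le k}\subseteq\bl{s_}{}\cup\bl{bs_}{}$, where any prospective move is forbidden as a potential move; (3)~$\Sm{i}{\le k}$ cannot gain any job either, because a small $j''$ reassigned onto $i$ still has $i'''\in\Gamma(j'')\setminus\{i\}$, and the specialised bookkeeping observation gives $i'''\notin\bl{s_}{\le k}\cup\bl{bs_}{\le k}$, so the confinement condition defining $\Sm{}{\le k}$ is violated. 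For $\Theta=\blsymbol{mm_}$ one additionally notes that all big jobs are undesirable on $i$, hence only a small job can be moved onto $i$ and $\Bg{i}$ is frozen too. Combining (1)--(3) freezes every quantity appearing in the invariants, so both the $>1+R$ and the $\le 1+R$ halves are preserved.

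The part I expect to require the most care is (3): ruling out that the small reassigned job $j''$ silently enters the tight set $\Sm{i}{\le k}$ and inflates the $\le 1+R$ side of the \bltext{m_}/\bltext{mm_} invariants. The activator-index bookkeeping from the second paragraph is precisely what rules this out, and with (1) and (2) in hand it closes the proof.
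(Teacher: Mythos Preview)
Your proof is correct and follows essentially the same line as the paper's: both argue that during the lifespan of a blocker $\blocker{k+1}$ the prefix $\blraw{}{\le k}$ is frozen, that jobs in $\Sm{i}{\le k}$ can neither leave (no potential move exists) nor enter (the activator of any incoming small job lies strictly after $\blocker{k+1}$), and that the relevant big-job sets $B_i$ or $\Bgmin{i}$ are frozen because their members are undesirable via $\blocker{k+1}$ and hence activated no later than $\blocker{k+1}$. Your explicit activator-index bookkeeping for point~(3) is in fact a bit more carefully spelled out than the paper's terse treatment, but the underlying argument is identical.
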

\begin{proof}
During the lifespan of a blocker $\blocker{k+1} = (j, i, \Theta)$, $\blraw{}{\le k}$ does not change; otherwise
$\blocker{k+1}$ would be deleted. By definition, we cannot add potential moves
for jobs in $\Sm{i}{\le k}$ as long as $\blraw{}{\le k}$ does not change. 
At the same time no blockers for small jobs in $\sigma^{-1}(i)$ can already be in $\blraw{}{\le k}$,
since otherwise
$i\in\machines(\blraw{s_}{\le k}\cup\blraw{bs_}{\le k})$ and $\blocker{k+1}$ could not have been added.
For this reason we can deduce that $p(\Sm{i}{\le k})$ is constant during the lifespan of $\blocker{k+1}$.
When a job, which was activated by $\blocker{k+1}$ or an earlier blocker, is removed from $i$,
then $\blocker{k+1}$ is deleted.
For \bltext{bs_} or \bltext{mm_} blockers, all jobs in $B_i$ are activated within $\blraw{}{\le k+1}$
and therefore do not change during the lifespan of $\blocker{k+1}$.
Likewise, for \bltext{m_} blockers, $\Bgmin{i}$ does not change.
We conclude that the invariants are correct.
\end{proof}
\begin{theorem}
  \label{th-potential}
  At the start of each iteration of the loop, 
  there is a valid move in the blocker tree or a potential move of 
  a job in $\act{}{}$.
\end{theorem}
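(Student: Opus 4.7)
The plan is to argue by contradiction using the dual of the configuration-LP. Suppose that at some iteration neither a valid move in $\tree$ nor a potential move for any job in $\act{}{}$ exists. I would construct a feasible dual solution $(y, z)$ for the configuration-LP at $T = 1$ with strictly negative objective $\sum_{i\in\machines} y_i - \sum_{j\in\jobs} z_j < 0$. Since the dual is a minimization LP with trivial feasible point $(0,0)$ of value zero, any feasible point of negative value can be scaled to make the dual unbounded below, and by LP duality this certifies that the primal is infeasible at $T = 1$ --- contradicting the scaling $\OPT^* = 1$.

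For the construction I would set $z_j = 0$ for $j \notin \act{}{}$ and assign positive type-dependent weights on $\act{}{}$ (for instance $z_j = p_j$ on big active jobs and a suitable constant on $\Sm{}{}$, calibrated to match $R = 5/6$). For each machine $i$ the tight choice $y_i := \max_{C \in \mathcal C_i(1)} \sum_{j \in C} z_j$ makes dual feasibility automatic; only machines that can host at least one active job in a capacity-$1$ configuration contribute to $\sum_i y_i$. The heart of the proof is then to establish $\sum_i y_i < \sum_{j \in \act{}{}} z_j$.

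The key structural input is the absence of potential moves: for any big active $j$ and any $i \in \Gamma(j) \setminus \{\sigma(j)\}$ on which $j$ is not marked undesirable, Table~\ref{ta:simple-blockers} forces $p(\Sm{i}{}) + p_j > 1 + R$, so the active small jobs already pinned to $i$ occupy most of its capacity-$1$ budget. Together with the invariants of Lemma~\ref{lm:inv_} --- which lower-bound the load of $\Sm{i}{\le k} \cup \Bg{i}$ (or $\Sm{i}{\le k} \cup \Bgmin{i}$) on each blocker machine --- and with the observation that every move already in $\tree$ is load-blocked (otherwise a valid move would exist), one can upper-bound each $y_i$ by a function of the local active $z$-mass. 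Aggregating these local bounds over all machines, and using the tree structure to charge every active job to blocker machines, yields the strict inequality $\sum_i y_i < \sum_j z_j$.

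The main obstacle is choosing the weights so that the per-machine deficit matches the target $R = 5/6$. The four blocker types contribute asymmetrically: \bltext{m_} blockers only evict the smallest big job of their machine, with the undesirability rule ``$\min \Bg{i} < j$'' controlling the $z$-mass of competing big jobs; \bltext{mm_} blockers evict all big jobs; and \bltext{s_} and \bltext{bs_} blockers mark every job undesirable. Balancing the small-job weight against the partial-eviction asymmetry of \bltext{m_} blockers --- which is where the $1/6$ slack originates --- is the essential calculation that pins down the constant $R = 5/6$ rather than a weaker bound.
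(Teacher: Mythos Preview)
Your high-level strategy---assume neither a valid nor a potential move exists, build a dual solution to the configuration-LP at $T=1$ with negative objective, scale to get unboundedness, contradict $\OPT^*=1$---is exactly the paper's. But two concrete ingredients are missing, and your proposed weights point in the wrong direction.

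First, the weights. The paper takes $z^*_j=\min\{p_j,5/6\}$ for \emph{every} active job: small jobs keep their size $p_j$, big jobs are \emph{capped} at $5/6$. Your suggestion ($z_j=p_j$ on big jobs, a constant on small jobs in $\Sm{}{}$) is essentially inverted. The cap on big jobs is what makes the \bltext{mm_} case go through cleanly ($z^*_j\le 5/6<z^*_{j_B}+z^*_{j'_B}$ for two big jobs on $i$), and keeping $z^*_j=p_j$ on small jobs is what lets the volume argument $p(\Sm{i}{})+p_j>1+R$ translate directly into a bound on $z^*$-mass. A flat constant on small jobs would break this. Moreover, the paper does \emph{not} use the tight $y_i=\max_C z^*(C)$; it sets $y^*_i=z^*(\act{i}{})\pm 1/6$ according to whether $i\in\bl{bs_}{}$ or $i\in\bl{s_}{}$, and $y^*_i=z^*(\act{i}{})$ otherwise. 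Feasibility is then proved by a five-case analysis keyed to the blocker type of $i$.

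Second, and more important, you do not mention the combinatorial fact that drives the negative-objective claim: $|\bl{bs_}{}|\le|\bl{s_}{}|$. The paper proves this by observing that after any \bltext{bs_} blocker is inserted, at least one \bltext{s_} blocker is inserted before the next \bltext{bs_} blocker can be (because the \bltext{bs_} machine necessarily carries a small job not in $\Sm{}{}$, whose move has higher priority). This inequality is precisely what lets the $+1/6$ charges on $\bl{bs_}{}$ be absorbed by the $-1/6$ credits on $\bl{s_}{}$, after which $\sum_i y^*_i\le\sum_i z^*(\act{i}{})<\sum_j z^*_j$ follows from $z^*_{\jnew}>0$. Your sketch attributes the $1/6$ slack to the \bltext{m_} blockers' partial-eviction rule; in the paper's argument the \bltext{m_} and \bltext{mm_} machines actually come out \emph{even} ($y^*_i=z^*(\act{i}{})$ there), and the entire surplus/deficit bookkeeping lives on the $\blsymbol{bs_}$/$\blsymbol{s_}$ pair. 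Without this pairing lemma the aggregation step you describe cannot close.
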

\begin{proof}
  Suppose toward contradiction no potential move of a job in $\act{}{}$ 
  remains and no move in the blocker tree is valid. 
  We will show that the dual of the configuration-LP is unbounded (regarding makespan $1$),
  which implies the primal is infeasible. For that purpose, define the solution $z^*$, $y^*$ below for the dual.
  For all jobs $j$ and machines $i$ let
  \begin{align*}
    z^*_j = \begin{cases}
      \min\{p_j, \frac 5 6 \} &\text{if } j\in\act{}{}, \\
      0 &\text{otherwise;}
    \end{cases}
    \quad
    y^*_i = \begin{cases}
      z^*(\act{i}{}) + \frac 1 6 &\text{if } i\in\bl{bs_}{}, \\
      z^*(\act{i}{}) - \frac 1 6 &\text{if } i\in\bl{s_}{}, \\
      z^*(\act{i}{}) &\text{otherwise.}
    \end{cases}
  \end{align*}
  Note that since $\bl{bs_}{}\cap\bl{s_}{} = \emptyset$, $y^*$ is well-defined:
  On a blocker of either type, all jobs are undesirable, that is to say as long as one of such
  blockers remains in the blocker tree, the algorithm will not add another blocker with
  the same machine. 
  \newtheorem{claim}{Claim}[theorem]
  \begin{claim}
    \label{claim:simple-unboundedness}
    The objective value of the solution is negative, that is $\sum_{j\in\mathcal J} z^*_j > \sum_{i\in\mathcal M} y^*_i$.
  \end{claim}
  \begin{claim}
    \label{claim:simple-feasibility}
    The solution is feasible, that is to say $z^*(C) \le y^*_i$ for all
    $i\in\mathcal M$, $C\in\mathcal C_i$.
  \end{claim}
  It is easy to see that if the claims hold, then they also hold for
  a scaled solution $(\alpha \cdot y^*, \, \alpha \cdot z^*)$ with $\alpha > 0$.
  We can use this to reach an arbitrarily low objective value, thus proving that the dual 
  is unbounded, which means the assumption in the beginning must be false.
\end{proof}

\begin{proof}[Proof of Claim~\ref{claim:simple-unboundedness}]
  The critical part of the prove is that $|\bl{bs_}{}|\le|\bl{s_}{}|$.
  Recall that moves of small jobs have a higher priority than moves of big jobs.
  This means that after a \bltext{bs_} blocker is added to the blocker tree,
  all available moves of small jobs will be added,
  before another \bltext{bs_} blocker can be.
  It is enough to show that whenever a \bltext{bs_} blocker is added,
  there is at least one small job $j_S$ on the corresponding machine and there exists a machine
  $i'\in\Gamma(j_S)\bs\{\sigma(j_S)\}$ that is not in $\machines(\blraw{s_}{}\cup\blraw{bs_}{})$. 
  When this small job is moved, the \bltext{bs_} blocker
  will be deleted. This means as long as the \bltext{bs_} blocker exists,
  there will be a \bltext{s_} blocker added after it,
  but before the next \bltext{bs_} blocker.

  Let $j_B$ be a big job and $i$ some machine. 
  We consider the situation where a \bltext{bs_} blocker for $(j_B,\, i)$ 
  is added. Assume w.l.o.g. that $(j_B, i)$ is not valid. If it is valid, the
  blocker will be removed again instantly.
  At the time $(j_B,\, i)$ is added as 
  a \bltext{bs_} blocker,
  we have that $p(\Sm{i}{} + \Bg{i}) + p_{j_B} \le 1 + R$. 
  On the other hand, since $(j_B,\, i)$ is not valid, we also know that
  $p(\sigma^{-1}(i)) + p_{j_B} > 1 + R$.
  This implies there has to be a small job $j_S$ on $i$ that is not in $\Sm{}{}$.
  By definition of $\Sm{}{}$ there must be
  a machine $i'\in\Gamma(j_S)\backslash\{i\}$ that is not in $\machines(\blraw{s_}{}\cup\blraw{bs_}{})$.
  Thus
  \begin{align*}
    \sum_{j\in\mathcal J} z^*_j \ge z^*_{\jnew} + \sum_{i\in\mathcal M}z^*(\sigma^{-1}(i))
                                > 0 + \sum_{i\in\machines} y^*_i + \frac 1 6 |\bl{s_}{}| - \frac 1 6 |\bl{bs_}{}|
                                \ge \sum_{i\in\machines} y^*_i ,
  \end{align*}
  which concludes the proof.
\end{proof}

\begin{proof}[Proof of Claim~\ref{claim:simple-feasibility}]
  Let $i\in\machines$ and $C\in\mathcal C_i$.
  First, we consider the cases where $i\in\machines(\blraw{s_}{}\cup\blraw{bs_}{})$. Here we will show that
  $y^*_i \ge 1$, which implies the claim, since $z^*(C) \le 1$.
    \begin{distinction}
\case{$i\in\bl{bs_}{}$}
      Then there must be a \bltext{bs_} move $(j_B, i)$ which is not valid;
      that is to say, $p(\sigma^{-1}(i)) + p_{j_B} > 1 + R$.
      If every job $j\in\sigma^{-1}(i)$ has $p_j \le 5/6$, then $p(\sigma^{-1}(i)) = z^*(\sigma^{-1}(i))$ and thus
        \begin{equation*}
          y^*_i = p(\sigma^{-1}(i)) + \frac 1 6
          > 1 + R - p_{j_B} + \frac 1 6 \ge 1 .
        \end{equation*}
        Otherwise let $j'_B\in B_i$ with $p_{j'_B} > 5/6$. Then
        \begin{equation*}
          y^*_i = z^*(\sigma^{-1}(i)) + \frac 1 6 \ge z^*_{j'_B} + \frac 1 6 = 1 .
        \end{equation*}
\case{$i\in\bl{s_}{}$} Then there is a small move $(j_S, i)$, which is not valid,
        i.e., $p(\sigma^{-1}(i)) + p_{j_S} > 1 + R$.
        If there is more than one big job assigned to $i$, we easily see that
        $z^*(\sigma^{-1}(i)) \ge 1$.
        Otherwise at most one job can be rounded down, which implies
        $z^*(\sigma^{-1}(i)) \ge p(\sigma^{-1}(i)) - 1/6$. Hence
        \begin{equation*}
          y^*_i \ge p(\sigma^{-1}(i)) - \frac 2 6 \\
          > 1 + R - p_{j_S} - \frac 2 6 \ge 1 .
        \end{equation*}
    \end{distinction}
  In the following, we study the cases where $i\notin\machines(\blraw{s_}{}\cup\blraw{bs_}{})$. Here it is
  sufficient to show that if there is a $j\in C$ big with $\sigma(j)\neq i$,
  then $z^*_j \le z^*(\act{i}{} \bs C)$.
  The reason is that all small jobs in $C\cap\act{}{}$ are in $\act{i}{}$ or else there
  would be a potential move.
  This means, if $C$ does not contain a big job that are not also in $\sigma^{-1}(i)$,
  then
  \begin{equation*}
    y^*_i = z^*(\sigma^{-1}(i)) \ge z^*(\act{i}{} \cap \jobs_S) + z^*(B_i) \ge z^*(C) .
  \end{equation*}
  Otherwise there is exactly one big $j_B\in C$ and $\sigma(j_B) \neq i$.
  Therefore
  \begin{align*}
    z^*(C) = z^*_{j_B} + z^*(\act{i}{} \cap C)
    \le z^*(\act{i}{}\bs C) + z^*(\act{i}{} \cap C)
    = y^*_i . 
  \end{align*}
  For this purpose, in the cases below let $j\in C$ big with $\sigma(j) \neq i$.
  \begin{distinction}[noreset]
\case{$i\in\bl{mm_}{}\bs\machines(\blraw{s_}{}\cup\blraw{bs_}{})$}
  We observe that Invariant~\ref{inv:mm_} implies that $B_i \neq \Bgmin{i}$ and
  thereby that there are at least two jobs $j_B$, $j'_B$ in $B_i$.
  Thus
  \begin{equation*}
    z^*_j \le 5/6 < z^*_{j_B} + z^*_{j'_B} \le z^*(\act{i}{}\backslash C) .
  \end{equation*}
\case{$i\in\bl{m_}{}\bs\machines(\blraw{s_}{}\cup\blraw{bs_}{}\cup\blraw{mm_}{})$}
  By Invariant~\ref{inv:m_} obviously $B_i$ is not empty.
  Let $j_B = \min B_i$. 
  \begin{subdistinction}
    \subcase{$j > j_B$ and $p_{j_B} \le 5/6$}
      We notice that $j$ is not undesirable on $i$.
      If $(j, i)$ is not a blocker in $\tree$, then it must not be a potential move,
      that is
      \begin{equation}
        \begin{aligned} 
          p(\Sm{i}{} \cup \{j_B\}) + p_j \ge p(\Sm{i}{}) + p_j
                                         > 1 + R .
        \end{aligned}\label{eq:not-potential}
      \end{equation}
      If it is a blocker, then by Invariant~\ref{inv:m_}, inequality~(\ref{eq:not-potential}) holds as well,
      since the blocker must be a \bltext{m_} blocker. Thus
      \begin{align*}
        z^*(\act{i}{}\backslash C) \ge p(\Sm{i}{}\backslash C) + p_{j_B}
                                   \ge p(\Sm{i}{}) - (1 - p_j) + p_{j_B}
                                   > R \ge z^*_j .
      \end{align*}
    \subcase{$j < j_B$ or $p_{j_B} \ge 5/6$}
      Then $p_j \le p_{j_B}$ or $z^*_{j_B} = 5/6 = z^*_{j}$; in particular
      \begin{equation*}
        z^*_j \le z^*_{j_B} \le z^*(\act{i}{}\bs C).
      \end{equation*}
  \end{subdistinction}
\case{$i\notin\bl{}{}$}
  Then since $(j, i)$ is not a potential move, we find $p(\Sm{i}{}) + p_j > 1 + R$.
  Therefore
  \begin{align*}
    z^*_j \le 1 + R - p_j - (1 - p_j)
    < p(\Sm{i}{}) - (1 - p_j)
    \le z^*(\act{i}{}) - z^*(\act{i}{} \cap C)
    = z^*(\act{i}{} \bs C) .
  \end{align*}
    \end{distinction}
  This completes the proof.
\end{proof}

\begin{theorem}
  The algorithm terminates.
\end{theorem}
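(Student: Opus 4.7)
The plan is to show termination via a lex-decreasing potential function on the algorithm's state. Set
\[
\Phi := (v_1, v_2, \ldots, v_\ell, \top, \ldots, \top),
\]
the vector of values of the blockers currently in the tree (in order of addition), padded with a symbol $\top$ exceeding every actual value up to a fixed length $N$ that bounds the maximum possible tree size. Because at any moment each move $(j,i)$ with $i\in\Gamma(j)$ appears in the tree at most once (condition~1 of potential moves), $N$ is finite, so $\Phi$ ranges over a finite set that is well-ordered under lex order.

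Iterations that add a blocker strictly decrease $\Phi$, because a coordinate equal to $\top$ becomes a finite value. Iterations that perform a valid move may instead \emph{increase} $\Phi$: the blockers $\mathcal{B}_k,\ldots,\mathcal{B}_\ell$ are deleted, turning their positions into $\top$. My plan would therefore compare $\Phi$ not at every iteration, but between the states just before two consecutive move-performing iterations $t < t'$, and show $\Phi_{t'} \prec \Phi_t$.

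The crux — and the main obstacle — is showing that the first new blocker placed at one of the freed positions, say position $p \le k$, has value strictly smaller than the old value $v_p$. The enabling observation is that the machine $i^* = \sigma(j)$ of $\mathcal{B}_k$ strictly \emph{improves}: $|\sigma^{-1}(i^*)|$ drops by one, and if $j$ is big then either $|B_{i^*}|$ strictly decreases or $\min B_{i^*}$ strictly increases. Moreover, the type of $\mathcal{B}_k$ constrains what $j$ can be: for $\Theta_k = \blsymbol{m_}$ necessarily $j = \min B_{i^*}$; for $\Theta_k = \blsymbol{mm_}$ the job $j$ is big; for $\Theta_k \in \{\blsymbol{s_},\blsymbol{bs_}\}$ any job $j$ is possible. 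A short case analysis over the four types would then show that the potential move $(j_0, i^*)$ corresponding to $\mathcal{B}_k$ is still available after the deletion and that its value has strictly decreased (possibly with a type change). Since the algorithm selects the minimum-value potential move, the new blocker at position $p$ has value at most that of $(j_0, i^*)$, hence strictly less than $v_p$.

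Once this is established, $\Phi$ strictly decreases between consecutive move iterations. As $\Phi$ takes values in a finite well-ordered set, only finitely many moves are performed; combined with the fact that between two consecutive moves at most $N$ additions can happen (and Theorem~\ref{th-potential} guarantees progress at every iteration), the algorithm terminates. The most delicate piece is verifying the case analysis — especially corner cases where the state update threatens to make $(j_0, i^*)$ fail condition~3 of a potential move, such as when $\Theta_k = \blsymbol{m_}$ and the new $\min B_{i^*}$ is at least $j_0$; in such subcases one must argue that some other potential move of strictly smaller value is available instead.
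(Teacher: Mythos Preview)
Your plan is essentially the paper's own argument: the paper calls your $\Phi$ the \emph{signature vector} and performs the same case analysis over the type of the activating blocker $\mathcal{B}_k$ to show that its move is again a potential move with strictly smaller value. Two corrections are needed, though.

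First, your choice of comparison points does not work. You claim $\Phi_{t'}\prec\Phi_t$ for two \emph{consecutive move-performing} iterations $t<t'$. But nothing prevents two moves from occurring in immediately successive iterations with no add between them: after the move at $t$ the tree shrinks to $\mathcal{T}^{(\le k-1)}$, and if some move already present there happens to be valid, it is performed at $t'=t+1$. Then $\Phi_{t'}=(v_1,\dots,v_{k-1},\top,\dots)$ while $\Phi_t=(v_1,\dots,v_\ell,\top,\dots)$ with $\ell\ge k$, so $\Phi_{t'}\succ\Phi_t$, the opposite of what you need. The paper avoids this by bundling a maximal run of consecutive moves and comparing the state before the run to the state after the \emph{first add} following it; only the \emph{last} move of the run matters, and for that move the case analysis you sketch goes through, since the invariants of Lemma~\ref{lm:inv_} hold at the iteration just before that last move and $\mathcal{T}^{(\le k-1)}$ is literally the same set of blockers as when $\mathcal{B}_k$ was inserted. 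Equivalently, compare $\Phi$ at the states immediately after each add.

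Second, the corner case you flag as ``most delicate'' dissolves. If $\mathcal{B}_k$ is the \emph{activating} blocker of the moved job and has type $\mathrm{BL}$, then the moved job is big and equals the current $\min B_{i^*}$; every blocker type with machine $i^*$ would already regard this job as undesirable, so no blocker with machine $i^*$ can occur in $\mathcal{T}^{(\le k-1)}$ at all---otherwise that earlier blocker, not $\mathcal{B}_k$, would be the activator. Hence condition~3 for $(j_0,i^*)$ holds vacuously, regardless of the new value of $\min B_{i^*}$; no alternative potential move is needed.
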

\begin{proof}
  We will define what we call the signature vector as
  \begin{equation*}
    (\mathrm{val}(\blocker{1}),\mathrm{val}(\blocker{2}), \dotsc,\mathrm{val}(\blocker{\ell}), \infty),
  \end{equation*}
  where $\mathrm{val}(\blocker{k})$ is the value of the move corresponding to
  a blocker $\blocker{k}$ at the time it was added to the tree.
  We argue that the signature vector decreases lexicographically after finitely
  many iterations of the algorithm's loop.
  Since we can bound the number of possible vectors fairly easily,
  we can also bound the overall running time:
  Each move can only appear once in the blocker tree,
  therefore $\ell\le |\jobs|\cdot|\machines|$.
  The range of $\mathrm{val}$ is also finite; hence the number of signature vectors is
  finite as well.

  When the algorithm adds a blocker, clearly the signature vector decreases
  lexicographically. Now consider a valid move that is performed.
  It may be that many consecutive moves are performed, but at some point the algorithm will
  have to add a blocker again.
  We will observe only the last move $(j_0, i_0)$ that is performed before a blocker is added.
  Let $\blocker{k} = (j, i, \Theta)$ be the blocker that activated $j_0$. In particular $j_0$
  was assigned to $i$. 
  We will show that the blocker that will be added in the next iteration
  has a lower value than $\blocker{k}$ had,
  thereby ultimately decreasing the lexicographic order.
  A candidate for this blocker is $(j, i)$. In the case distinction below we show that $(j, i)$
  is a potential move and
  $\mathrm{val}(j, i)$ is lower than it was the last time the blocker was added. This means
  the next blocker that is added must definitely have a lower value.
  The blocker tree, that is $\blraw{}{\le k - 1}$, is identical to the tree when $(j, i)$ was added
  the last time. This means $j$ is still not undesirable.
  \begin{distinction}
    \case{$\blocker{k}$ was a big-/small-to-any blocker} 
      Since all jobs were undesirable on $i$, no jobs have been moved there.
      If $\blocker{k}$ was a \bltext{bs_} blocker, then before $j_0$ was moved to $i_0$,
      $p(\Sm{i}{\le k-1} \cup B_i) + p_j \le 1 + R$ (Invariant~\ref{inv:bs_})
      and this move cannot have increased the left-hand side.
      In other words, $(j, i)$ is still a potential move of the same type.
      Since $j_0$ was removed from $i$, $|\sigma^{-1}(i)|$ has decreased and so has
      $\mathrm{val}(j, i)$.
    \case{$\blocker{k}$ was a \bltext{m_} blocker}
      Because $j_0$ was activated by $\blocker{k}$, it must have been
      the smallest big job on $i$. 
      By Invariant~\ref{inv:m_} we know that before $j_0$ was moved,
      \begin{align}
        p(\Sm{i}{\le k-1}) + p_j &\le 1 + R \text{ and} \label{eq:term-m1} \\
        p(\Sm{i}{\le k-1} \cup \Bgmin{i}) + p_j &> 1 + R. \label{eq:term-m2}
      \end{align}
      (\ref{eq:term-m1}) still holds, since the move has no effect on it.
      If $j_0$ was the only big job, then $B_i = \emptyset$ and $p(\Sm{i}{} \cup B_i) + p_j \le 1 + R$;
      $(j, i)$ is now a \bltext{bs_} move, which has a lower value.
      If $j_0$ was not the only big job on $i$, then $\min B_i$ has increased,
      as no big jobs $j_B$ with $j_B < j_0$ were moved to $i$ since $\blocker{k}$ was added;
      the left-hand side of (\ref{eq:term-m2}) has not decreased
      and $(j, i)$ is still a \bltext{m_} move, but has a lower value as well.
    \case{$\blocker{k}$ was a \bltext{mm_} blocker}
      Since $j_0$ was undesirable regarding $\blocker{k}$, it must be a big job.
      No big jobs were moved to $i$ and $j_0$ was removed; hence $|B_i|$ has
      decreased.
      Before $j_0$ was moved, 
      \begin{align*}
        p(\Sm{i}{\le k-1}) + p_j \le p(\Sm{i}{\le k-1} \cup \Bgmin{i}) + p_j
                                 \le 1 + R,
      \end{align*}
      where we use Invariant~\ref{inv:mm_} in the last inequality.
      The move obviously has no effect on the left-hand side and the inequality must still hold.
     It may, however, be that $(j, i)$ is now a potential move of different type,
      but all other types have lower values; $\mathrm{val}(j, i)$ decreases in any case.
  \end{distinction}
\end{proof}

\section{Conclusion}
We have almost tripled the difference between integrality gap and the classical
approximability result of $2$.
This also marks the gap between best estimation result and best approximation result
(except for an arbitrary small $\epsilon > 0$).
\paragraph{On better bounds.}
The largest gap we have witnessed in an instance is $9/6$ (given in ~\cite{DBLP:conf/swat/JansenLM16}).
The bound on the integrality gap that was proven in this paper, $11/6$,
is not necessarily tight.

An approach for improving the upper bound is to first consider a set of more
constrained instances.
A natural candidate for this could be the set of instances where for
every job $j$ either $p_j = 1$ or $p_j \le 1/2$ (as before,
$1$ is the optimum of the configuration-LP). 
Even there it does not appear to be straight-forward to improve the algorithm,
as we will show below.
With the techniques by Svensson, it was already easy to see that these instances
admit an integrality gap of no more than $11/6$. In this sense,
we believe that a significant bottleneck that was introduced by the medium jobs
in the original algorithm was eliminated by our improvement.

We will show that using certain basic assumptions on the proof 
we can construct a situation where the bound $11/6$ is the best we can get.
In particular, the proof cannot be improved only by altering the constants.
Let $1 + R$ be the makespan we try to achieve.
First, we claim that we need to choose the value $z^*_{j_B}$
for big jobs $j_B$ to be at most $R$:
Let $j_B$ be a big job with $p_{j_B} = 1$.
We consider a situation where $i\in\Gamma(j_B)$ is a machine with $p(\Sm{i}{}) = R + \epsilon$
for some $\epsilon > 0$ and no
jobs besides $\Sm{i}{}$ are assigned to $i$.
The move $(j_B, i)$ must not be a potential move, since
Claim~\ref{claim:simple-unboundedness} relies on the existence of
a small job on $i$ that is not in $\Sm{}{}$ for every non-valid potential \bltext{bs_} move.
Let $C$ be the configuration with only $j_B$ in it. 
Then for feasibility (Claim~\ref{claim:simple-feasibility}) we need that
\begin{equation*}
  z^*_{j_B} = z^*(C) \le y^*_i = z^*(\sigma^{-1}(i)) = p(\Sm{i}{}) = R + \epsilon .
\end{equation*}
Now let there be two machines $i\in\bl{s_}{}$ and $i'\in\bl{bs_}{}$.
$i$ was added because of a move $(j_S, i)$ of a small job $j_S$ with $p_{j_S} = 1/2$
and it is assigned a big job $j_B$ with $p_{j_B} = 1$.
Then $p(\sigma^{-1}(i))$ can be as low as $1 + R - 1/2 + \epsilon$ for any $\epsilon > 0$ without $(j_S, i)$ being valid. We assume that this is the case.
Similarly we assume that $p(\sigma^{-1}(i')) = R + \epsilon$ and the big move to $i'$ is not valid.
In Claim~\ref{claim:simple-feasibility} we rely on $y^*_i\ge 1$ and $y^*_{i'} \ge 1$
and in Claim~\ref{claim:simple-unboundedness} we need that $y^*_i + y^*_{i'} \le z^*(\sigma^{-1}(i)) + z^*(\sigma^{-1}(i'))$.
Together we get
\begin{align*}
  2 \le y^*_i + y^*_{i'}
    \le z^*(\sigma^{-1}(i)) + z^*(\sigma^{-1}(i'))
    = (\frac 1 2 + R  + \epsilon - p_{j_B} + z^*_{j_B}) + (R + \epsilon)
    \le 3 R - \frac 1 2 + 3 \epsilon ,
\end{align*}
which is equivalent to $R \ge 5/6 - \epsilon$.

\paragraph{Beyond estimations.}
The perhaps most interesting
question regarding this problem is, whether a polynomial approximation of the same guarantee
(or at least of guarantee $2 - \delta$ for some constant $\delta > 0$) exists.
Indeed, there are very few problems for which the best known estimation rate is lower than the
best known approximation rate.
Though it seems likely, we cannot know for sure that such an approximation algorithm exists.
Feige and Jozeph even gave a proof that under some complexity assumptions there must be problems
with estimation algorithms superior to the best approximation algorithms~\cite{DBLP:conf/innovations/FeigeJ15}.

On the other hand, for the problem of {\sc Restricted Max-Min Fair Allocation}
efficient variants of the respective local search algorithm were
discovered~\cite{DBLP:journals/talg/PolacekS16,DBLP:conf/soda/AnnamalaiKS15}.
We conjecture that using similar approaches this is possible for the 
{\sc Restricted Assignment} problem as well.

\bibliographystyle{abbrv}
\bibliography{gap_plain}
\end{document}